\newcommand{\B}{\mathcal{B}}
\title{Power domination on triangular grids}
\author{Prosenjit Bose
\thanks{School of Computer Science, Carleton University, Ottawa ON, Canada. Research supported in part by NSERC. {\tt jit@scs.carleton.ca}}
        \and
        Claire Pennarun\thanks{LaBRI, Univ. Bordeaux, UMR 5800 {\tt  claire.pennarun@labri.fr}}
        \and 
        Sander Verdonschot\thanks{School of Computer Science, Carleton University, Ottawa ON, Canada. Research supported in part by NSERC. {\tt sander@cg.scs.carleton.ca}}}
\begin{document}
\thispagestyle{empty}
\maketitle

\begin{abstract}
The concept of \emph{power domination} emerged from the problem of monitoring electrical systems. Given a graph $G$ and a set $S \subseteq V(G)$, a set $M$ of monitored vertices is built as follows: at first, $M$ contains only the vertices of $S$ and their direct neighbors, and then each time a vertex in $M$ has exactly one neighbor not in $M$, this neighbor is added to $M$. 
The \emph{power domination number} of a graph $G$ is the minimum size of a set $S$ such that this process ends up with the set $M$ containing every vertex of $G$.
We here show that the power domination number of a triangular grid $T_k$ with hexagonal-shape border of length $k-1$ is exactly $\left\lceil \dfrac{k}{3} \right\rceil$.
\end{abstract}

\section{Introduction}

Power domination is a problem that arose from the context of monitoring electrical systems~\cite{mili_90,baldwin_93}, and was reformulated in graph terms by Haynes et al.~\cite{haynes_02}.

Given a graph $G$ and a set $S\subseteq V(G)$, we build a set $M$ as follows: at first, $M$ is the closed neighborhood of $S$, i.e. $M = N[S]$, and then iteratively a vertex $u$ is added to $M$ if $u$ is the only neighbor of a monitored vertex $v$ that is not in $M$ (we say that $v$ \emph{propagates} to $u$).
At the end of the process, we say that $M$ is the set of vertices \emph{monitored} by $S$.
We say that $G$ is \emph{monitored} when all its vertices are monitored. 
The set $S$ is a \emph{power dominating set} of $G$ if $M = V(G)$, and the minimum cardinality of such a set is the \emph{power domination number} of $G$, denoted by $\gamma_P(G)$.

Power domination has been particularly well studied on regular grids and their generalizations: the exact power domination number has been determined for the square grid~\cite{dorfling_06} and other products of paths~\cite{dorbec_08}, for the hexagonal grid~\cite{ferrero_11}, as well as for cylinders and tori~\cite{barrera_11}. 
These results are particularly interesting in comparison with the ones on the same classes for (classical) domination: for example, the problem of finding the domination number of grid graphs $P_n \times P_m$ was a difficult problem which was solved only recently~\cite{goncalves_11}.
They also rely heavily on propagation: it is generally sufficient to monitor (with adjacency alone) a small portion of the graph in order to propagate to the whole graph.

We here continue the study of power domination in grid-like graphs by focusing on triangular grids with hexagonal-shaped border.

\bigskip
A \emph{triangular grid} $T_k$ has vertex set $V(T_k) = \{(x,y,z) \mid x,y,z \in [0..2k-2], x-y+z = k-1 \}$. Two vertices $(x,y,z)$ and $(x',y',z')$ are adjacent if and only if $|x'-x|+|y'-y| + |z'-z| = 2$. The graph $T_k$ has a regular hexagonal shape, and $k$ is the number of vertices on each edge of the hexagon. Figure~\ref{fig:def_grid} shows the two triangular grids $T_2$ and $T_3$. Note that $T_k$ appears as a subgraph of $T_{k+1}$ (where $(1,1,1)$ has been added to the coordinates of each vertex in $T_k$).

We prove the following theorem:
 \begin{theorem} \label{th:grid}
 For $k \in \mathbb{N}^*$, $\gamma_P(T_k) = \left\lceil  \dfrac{k}{3} \right\rceil$.
 \end{theorem}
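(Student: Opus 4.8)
The plan is to prove the bounds $\gamma_P(T_k)\le\lceil k/3\rceil$ and $\gamma_P(T_k)\ge\lceil k/3\rceil$ separately; in both directions the useful viewpoint is to slice $T_k$ by one of its three families of $2k-1$ parallel rows $L_0,\dots,L_{2k-2}$, where $L_0$ and $L_{2k-2}$ are opposite sides of the hexagon, the row lengths run $k,k+1,\dots,2k-1,\dots,k+1,k$, and $L_{k-1}$ is the long central row.

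For the upper bound the engine is a \emph{cascade lemma}: if two consecutive rows straddling the centre --- that is $L_{k-2},L_{k-1}$ or $L_{k-1},L_k$ --- are entirely monitored, then propagation monitors all of $T_k$. I would prove it by following the propagation wave. Say $L_{k-1}$ and $L_k$ are full; a corner of $L_{k-1}$ then has exactly one unmonitored neighbour, namely its neighbour on $L_{k-2}$, so it propagates to it; the next vertex of $L_{k-1}$, whose remaining neighbours lie on $L_{k-1}$ and on the full row $L_k$ and are hence monitored, now also has a single unmonitored neighbour on $L_{k-2}$ and propagates to it, and so on, so $L_{k-2}$ fills up from both ends towards the middle. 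We now have two consecutive full rows one step nearer the boundary, and the wave repeats down to $L_0$; symmetrically it sweeps from $L_{k-1},L_k$ up to $L_{2k-2}$. It remains to build, for each $k$, a set $S$ with $|S|=\lceil k/3\rceil$ whose closed neighbourhood, followed by propagation, yields such a pair of full rows. I would take $S$ to be a regularly spaced pattern, roughly one vertex per three units along a band near the centre; since $N[S]$ has only $O(k)$ vertices while a central row has about $2k$, propagation must itself do substantial work to complete these rows, and tuning the pattern so that it does --- along with checking $k\le 3$ directly --- is where the case analysis of the upper bound lives.

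For the lower bound the key structural fact is a \emph{barrier lemma}. Going from $L_i$ to $L_{i+1}$ towards the centre the rows widen, and there every vertex of $L_i$ has exactly two neighbours in $L_{i+1}$; so if at some stage every monitored vertex lies in $L_0\cup\dots\cup L_i$ while $L_{i+1}$ and $L_{i+2}$ are empty and $i+1\le k-1$, no vertex of $L_i$ can ever have a single unmonitored neighbour in $L_{i+1}$, hence $L_{i+1}$ stays empty forever, and by the mirror statement propagation can push the monitored region only away from the centre, never across an empty row towards it. In particular $N[S]$ must already meet the central row $L_{k-1}$ of each of the three families --- otherwise the rows ever monitored are confined to an initial and a final segment separated by a permanently empty row. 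The main obstacle is upgrading this to $|S|\ge\lceil k/3\rceil$: meeting the three central rows only forces $|S|\ge 1$, and the real cost is that igniting the cascade needs two consecutive full central rows while the barrier lemma forbids completing a central row ``from the outside'' by propagation. I would combine the barrier lemma with a finer analysis of the only way a central row can grow --- one endpoint-type vertex at a time, each step essentially consuming the influence of a fresh ball $N[s]$, so about one ball per three rows --- to conclude that fewer than $\lceil k/3\rceil$ balls cannot complete the three central rows. Making this accounting rigorous across the three interacting families, and ruling out thrifty configurations in which one ball's influence is routed through boundary propagation into the widening regions of two different sides, is the delicate heart of the argument.
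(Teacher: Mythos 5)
Your two structural observations are both correct: the cascade lemma (two adjacent full rows straddling the centre propagate outward, because the rows narrow in both directions away from the centre, so each front vertex meets only one new unmonitored neighbour at a time) and the barrier lemma (propagation cannot enter a completely unmonitored row from its narrower neighbour, since every vertex there has two neighbours in the wider row). But neither half of the proof is actually carried out, and the lower bound is missing its central idea. For the upper bound you never exhibit the set $S$; the target you set yourself --- two full consecutive central rows, $4k-3$ vertices in all --- must be reached from an initial monitored set $N[S]$ of at most $7\lceil k/3\rceil \approx 7k/3$ vertices, so propagation has to supply most of those rows, and whether a ``regularly spaced pattern'' achieves this is exactly the content of the claim, not a detail to be tuned later. (The paper's construction places $S$ along a diagonal, at the vertices $(1+3\ell,\,d+\ell,\,k+d-2-2\ell)$, and lets propagation fill a triangle sitting on one side of the hexagon rather than two central rows; that triangle is its ignition condition.)

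The serious gap is the lower bound. As you acknowledge, the barrier lemma only forces $N[S]$ to meet each of the three central rows, i.e.\ $|S|\ge 1$, and the proposed upgrade --- ``one endpoint-type vertex at a time, each step essentially consuming the influence of a fresh ball'' --- is not an argument: a ball's influence is not consumed by a propagation step, a central row can grow from many places at once, and you identify no quantity that is monotone under propagation and forces $|S|\ge k/3$. The paper's proof rests on exactly such a quantity: the border $\mathcal{B}_{M[t]}$ of the monitored set (vertices of $M[t]$ with a neighbour outside $M[t]$) has size at most $6|S|$ at time $0$ and never increases under propagation, because the propagating vertex leaves the border when its last unmonitored neighbour is added. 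One then freezes the process at the moment when exactly half of $V(T_k)$ is monitored and proves an isoperimetric statement: after a compression (``shifting'') step that pushes the monitored vertices to one end of each line without increasing the border, the resulting staircase-shaped set of size $|V(T_k)|/2$ must contain some but not all vertices of every line $l_{v_1=i}$, hence has at least $2k-1$ border vertices; combining gives $2k-1\le 6|S|$. Nothing in your sketch plays the role of this monotone border count or of the halfway freeze, and without some such mechanism the accounting you describe cannot be made rigorous.
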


\begin{figure}[ht]
\centering
\includegraphics[scale=0.5]{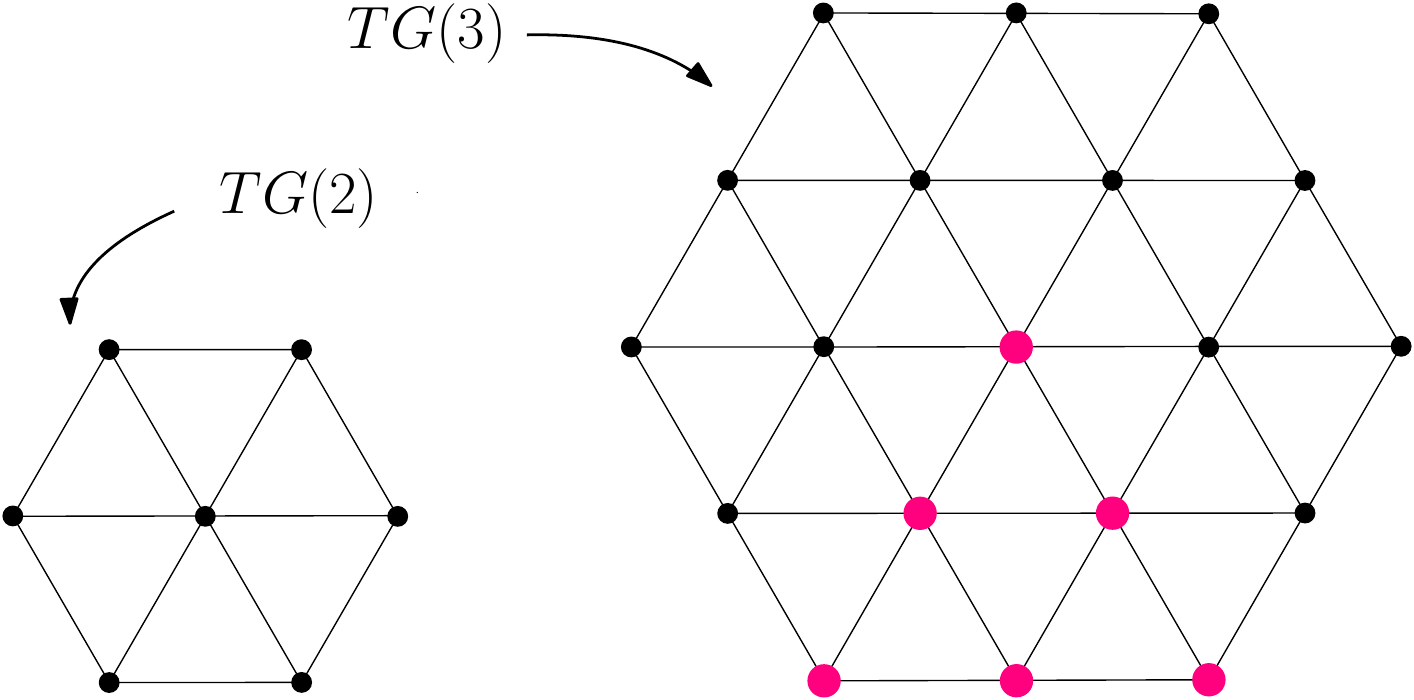}
\caption{The graphs $T_2$ and $T_3$, along with the coordinates of the vertices.}
\label{fig:def_grid}
\end{figure}

An inner vertex $v \in V(T_k)$ with coordinates $(x,y,z)$ has 6 neighbors with the following coordinates: $(x,y+1,z+1)$, $(x-1,y,z+1)$, $(x-1,y-1,z)$ , $(x,y-1,z-1)$ , $(x+1,y,z-1)$ and $(x+1,y+1,z)$ (see Figure~\ref{fig:grid_coordinates}). 
The coordinates of a vertex $v$ are denoted by $(v_1,v_2,v_3)$.
The \emph{line} $l_{v_j=i}$ is the set of vertices $\{(v_1,v_2,v_3) \mid v_j = i \}$ (see Figure~\ref{fig:line}).

\begin{figure}[ht]
\centering
\subfloat[]{
\includegraphics[scale=0.6]{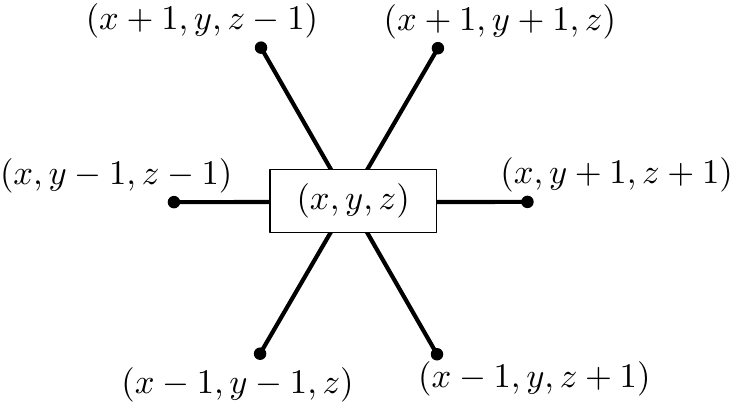}
\label{fig:grid_coordinates}}
\subfloat[]{
\includegraphics[scale=0.35]{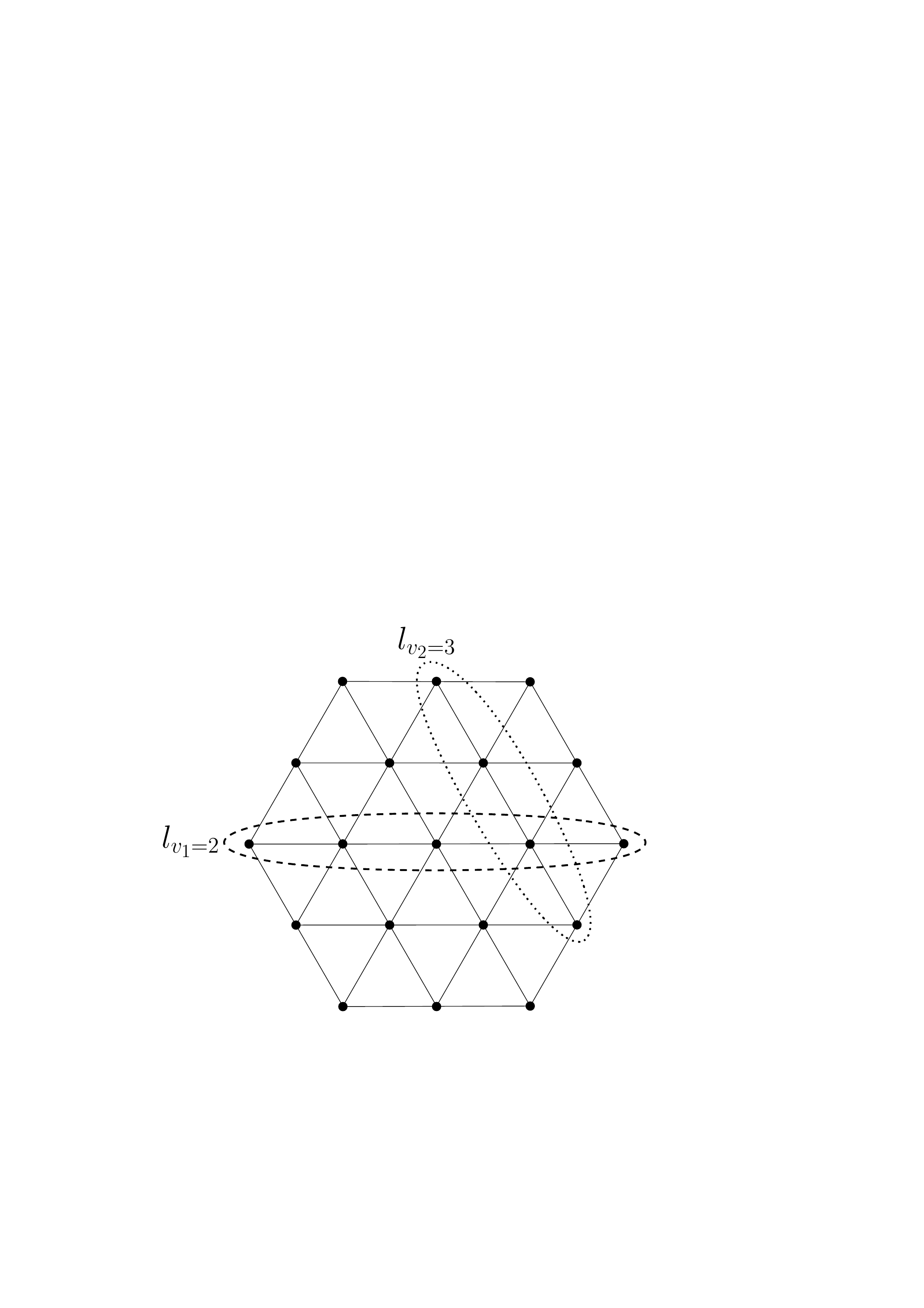}
\label{fig:line}}
\caption{(a) The coordinates of the neighbors around an inner vertex $v=(x,y,z)$. (b) The lines $l_{v_1=2}$ and $l_{v_2=3}$ in $T_3$.}
\end{figure}

One interesting property of the triangular grids is that if an equilateral triangle having one side of the hexagonal border as base is monitored, then the border allows the propagation until the whole graph is monitored. For example, it suffices to monitor the set $\mathcal{T} = \{v=(v_1, v_2,v_3) \in V(G) \mid 0 \leq v_1,v_2 \leq k-1, k-1 \leq v_3 \leq 2k-2\}$ to monitor $T_k$ (see Figure~\ref{fig:prop_hexa}).

\begin{figure}[ht]
\centering
\includegraphics[scale=0.5]{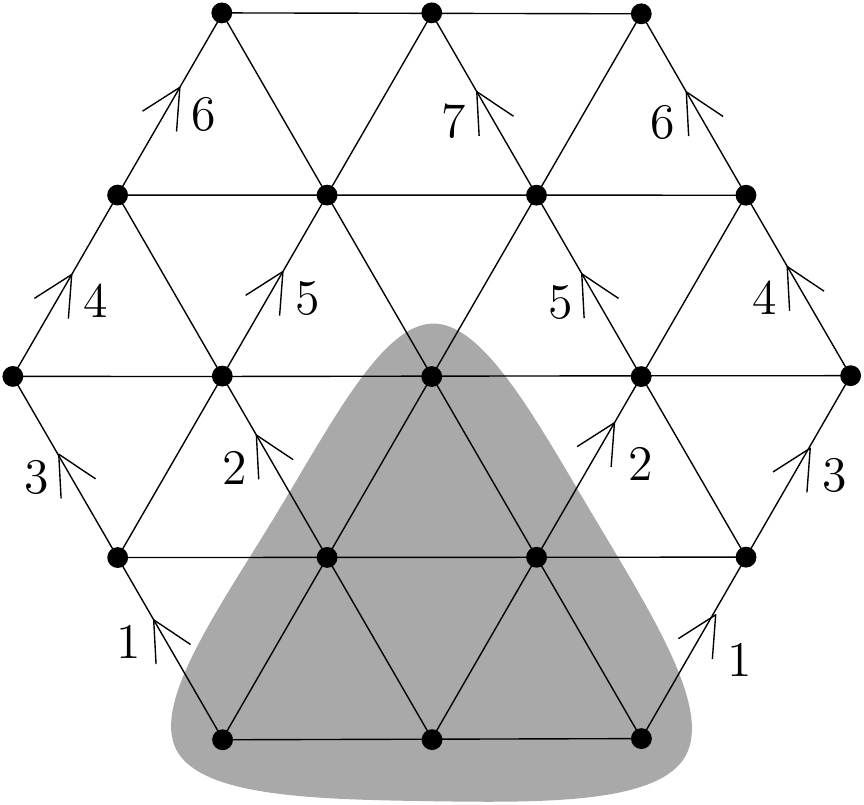}
\caption{The propagation steps to monitor $T_3$ once the set $\mathcal{T}$ (in the gray area) is monitored. Propagation steps indexed by the same number can be done in parallel.}
\label{fig:prop_hexa}
\end{figure}

We assume throughout the section that $k \geq 4$: observe that if $k \leq 3$, then $\gamma_P(T_k) = 1$, with $S = \{(k-2,k-2,k-1)\}$ (for $k=2,3$).
 
\section{Upper bound}

We begin by giving a construction for the upper bound:

\begin{lemma} \label{lem:upper}
For $k \in \mathbb{N}^*$, $\gamma_P(T_k) \leq \left\lceil  \dfrac{k}{3} \right\rceil$.
\end{lemma}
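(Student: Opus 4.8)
\medskip
\noindent\emph{Proof plan.} By the observation made just before this section it suffices to exhibit a set $S \subseteq V(T_k)$ with $|S| = \lceil k/3 \rceil$ whose monitored set eventually contains the triangle $\mathcal{T}$: once $\mathcal{T}$ is monitored the hexagonal border propagates to all of $T_k$. So the whole task reduces to choosing a good ``seed'' set $S$ and tracking how the monitored set grows from $N[S]$.

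I would take $S$ to consist of $\lceil k/3 \rceil$ vertices placed at regular spacing --- consecutive vertices of $S$ at distance $3$ --- along a line running along, or just inside, one side of the hexagonal border, and I would choose the two extreme vertices of $S$ so that their closed neighbourhoods already cover that entire side. Where exactly to put the two endpoints of this arithmetic progression depends on $k \bmod 3$, and this is precisely where the ceiling in the statement comes from; for $k \le 3$ the construction degenerates to the single central vertex already mentioned in the text.

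The verification then splits into a local part and a global part. Locally, I would check that $N[S]$, together with a bounded number of propagation rounds, monitors a small corner region of the hexagon that is enough to start the border propagation --- for instance two border sides meeting at a corner, or the equilateral triangle of the earlier observation. Globally, I would argue, by induction on the distance to that corner (respectively to that base), that propagation then sweeps across $T_k$, advancing the monitored frontier by one layer at a time because at each step the monitored region meets the border. This anchoring to the border is unavoidable: in a triangular grid a monitored region with a straight frontier never propagates --- every frontier vertex has two unmonitored neighbours beyond it --- so the sweep can only make progress where the boundary of $T_k$ removes one of those two neighbours, which is exactly the role played by the corners and sides of the hexagon (and is already visible in Figure~\ref{fig:prop_hexa}).

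The routine ingredient is the local step: computing $N[S]$ and running the first few propagation rounds, which a figure makes transparent. The real work, and the step I expect to be the main obstacle, is the global sweep: formulating the correct invariant for the evolving monitored region --- essentially, that its boundary is a staircase hinged to the hexagonal border --- and checking that this invariant is preserved by each propagation round, in particular as the sweep turns the corners of the hexagon and near the two ends of each layer, together with the small amount of residue-class bookkeeping the construction forces there.
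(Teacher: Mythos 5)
Your reduction to monitoring the triangle $\mathcal{T}$ is sound, and your observation that a straight monitored frontier cannot propagate in a triangular grid is exactly the right physical insight --- but it refutes your own construction. If $S$ lies on, or one line inside, a side of the hexagon, say $S\subseteq l_{v_1=0}\cup l_{v_1=1}$, then $N[S]\subseteq l_{v_1=0}\cup l_{v_1=1}\cup l_{v_1=2}$, and the monitored set can never reach $l_{v_1=3}$: the lines $l_{v_1=i}$ \emph{grow} as $i$ increases up to $k-1$, so every vertex of $l_{v_1=i}$ with $i<k-1$ --- including the two extreme vertices of that line --- has exactly two neighbours in $l_{v_1=i+1}$. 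The first vertex of $l_{v_1=3}$ to become monitored would have to be reached by propagation from a vertex of $l_{v_1=2}$ having exactly one unmonitored neighbour, which is impossible while both of its neighbours in $l_{v_1=3}$ are unmonitored; inductively $l_{v_1=3}$ is never reached. The corners of the hexagon do not rescue this: they only help when the frontier advances in the direction in which the lines shrink (which is why $\mathcal{T}$, whose apex is at the centre, does propagate outwards, cf.\ Figure~\ref{fig:prop_hexa}). So for $k\geq 4$ no seed set confined to the first two lines of a side power dominates $T_k$ --- not even taking the entire side as $S$.

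The paper's construction is genuinely different: the seeds are placed along the direction $(3,1,-2)$, i.e.\ consecutive seeds at graph distance $3$ but \emph{not} parallel to any side of the hexagon, running from just inside one side towards the centre, with the centre vertex $(k-1,k-1,k-1)$ added when $k\not\equiv 0 \pmod 3$. This diagonal placement makes the initial monitored region a jagged, staircase-shaped strip, and it is precisely the jaggedness of the frontier that allows propagation to advance (see Figure~\ref{fig:upper_bound}). To repair your proof you would need to replace the side-aligned seed by such a diagonal one and then actually carry out the sweep verification that you defer; as written, both the construction and the verification are missing, and the construction as described fails.
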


\begin{proof}
Let $i = \left\lfloor \frac{k}{3} \right\rfloor$, and $d = k-i-1$ if $k \equiv 0,1 \mod 3, d= k-i-2$ otherwise.
Let $S'$ be the following set of vertices (see Figure~\ref{fig:upper_bound}): $S' = \{ (1+3 \ell, d + \ell, k+d-2-2\ell), 0 \leq \ell \leq i-1\}$. In other words, $S'$ contains the vertex $v=(1,d,k+d-2)$ and vertices whose coordinates are obtained by adding $(3,1,-2)$ up to $i-1$ times to the coordinates of $v$.
If $k \not \equiv 0 \mod 3$, $S = S' \cup \{(k-1,k-1,k-1)\}$. Otherwise, $S=S'$.
Then we have, depending on the value of $k$ modulo 3:
\begin{itemize}
\item $k = 3i$: $|S| = i = \left\lceil \frac{3i}{3} \right\rceil$.
\item $k = 3i+1$: $|S| = i+1 = \left\lceil \frac{3i+1}{3} \right\rceil$.
\item $k = 3i+2$: $|S| = i+1 = \left\lceil \frac{3i+2}{3} \right\rceil$.
\end{itemize}
In each case, $S$ is a set with cardinality $\left\lceil \frac{k}{3} \right\rceil$, and $S$ progressively power dominates the 
whole triangular grid $T_k$. 
\end{proof}

\begin{figure}[ht]
\centering
\includegraphics[scale=0.4]{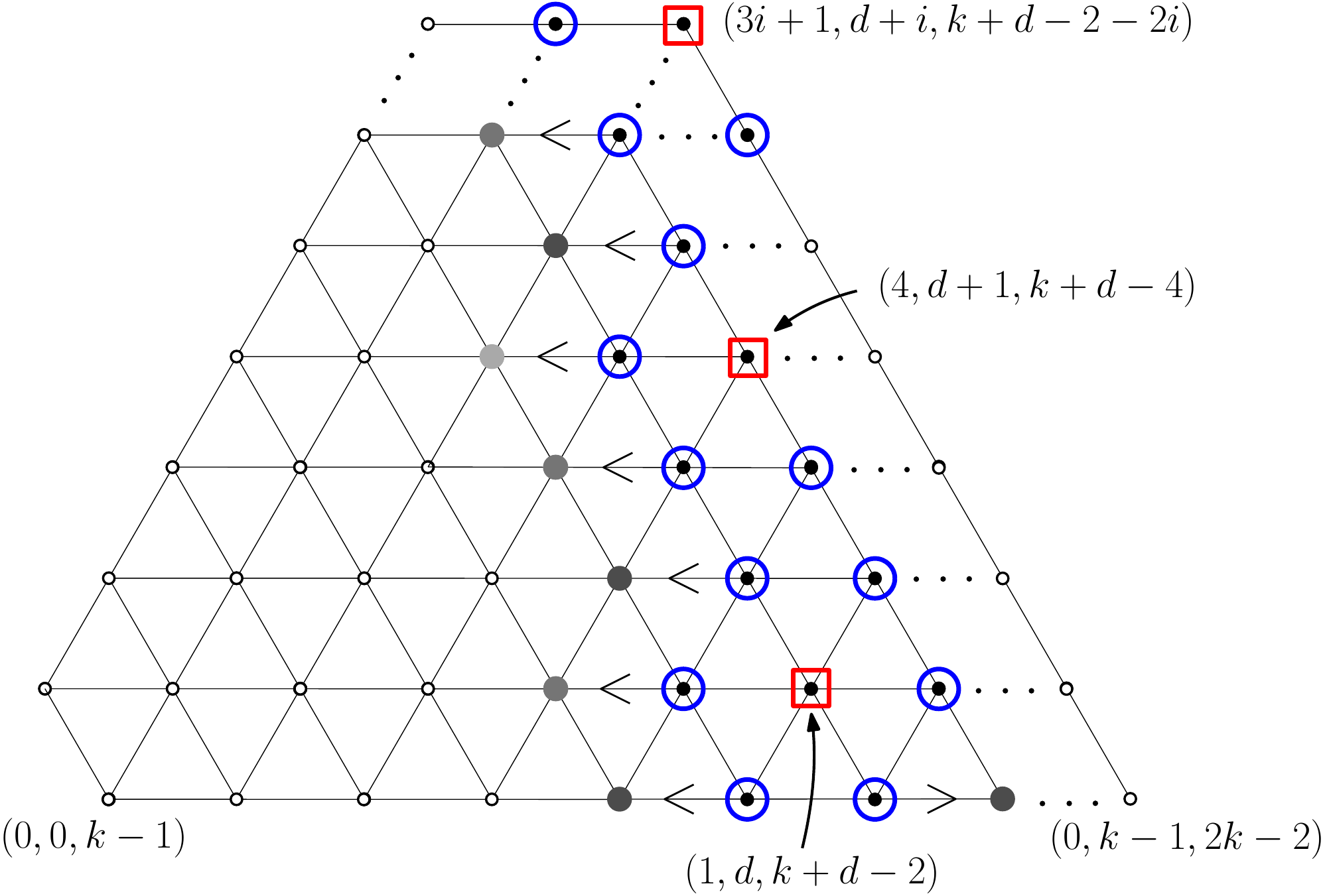}
\caption{Construction and propagation of the set $S'$: $d = k-i-1$ if $k \equiv 0,1 \!\mod 3$, $d= k-i-2$ if $k \equiv 2 \mod 3$. Red square-framed vertices are in $S'$, blue circle-framed vertices are in $N[S']$. Dark gray vertices are monitored in the first propagation round, gray ones in the second round, and the light gray one in the third round. 
Observe how the pattern of monitored vertices repeats.}
\label{fig:upper_bound}
\end{figure}

\section{Lower bound}

Let $A \subset V(T_k)$ be a set of vertices of the graph. 
We define the \emph{border} $\B_A \subseteq A$ of $A$ as follows: $\B_A = \{v \in A, N(v) \setminus A \neq \emptyset\}$.
Let $A_{v_j=i}$ denote the set of vertices of $A$ in a given line $l_{v_j=i}$. 
We define the \emph{$j$-shifted set} $A'= A^{(j)}$ of $A$ as follows (see Figure~\ref{fig:shift}): 
$|A'| = |A|$, and for each line $l_{v_j=i}$, 
$A'$ contains the $|A_{v_j=i}|$ vertices 
with smallest coordinates $v_{j+1}$ (for example, the 1-shifted set of $A$ contains only left-most vertices on each horizontal line). More formally,
\[A'_{v_j=i} = \{(v_1,v_2,v_3) \mid v_j=i, v_{j+1} = \ell + \alpha, 0 \leq \ell < |A_{v_j=i}| \},\quad \]
with $\alpha = 0$ if $0 \leq i \leq k-1$, and $\alpha = i - (k-1)$ if $k \leq i \leq 2k-2$.


\begin{figure}[ht]
\centering
\includegraphics[scale=0.34, page=1]{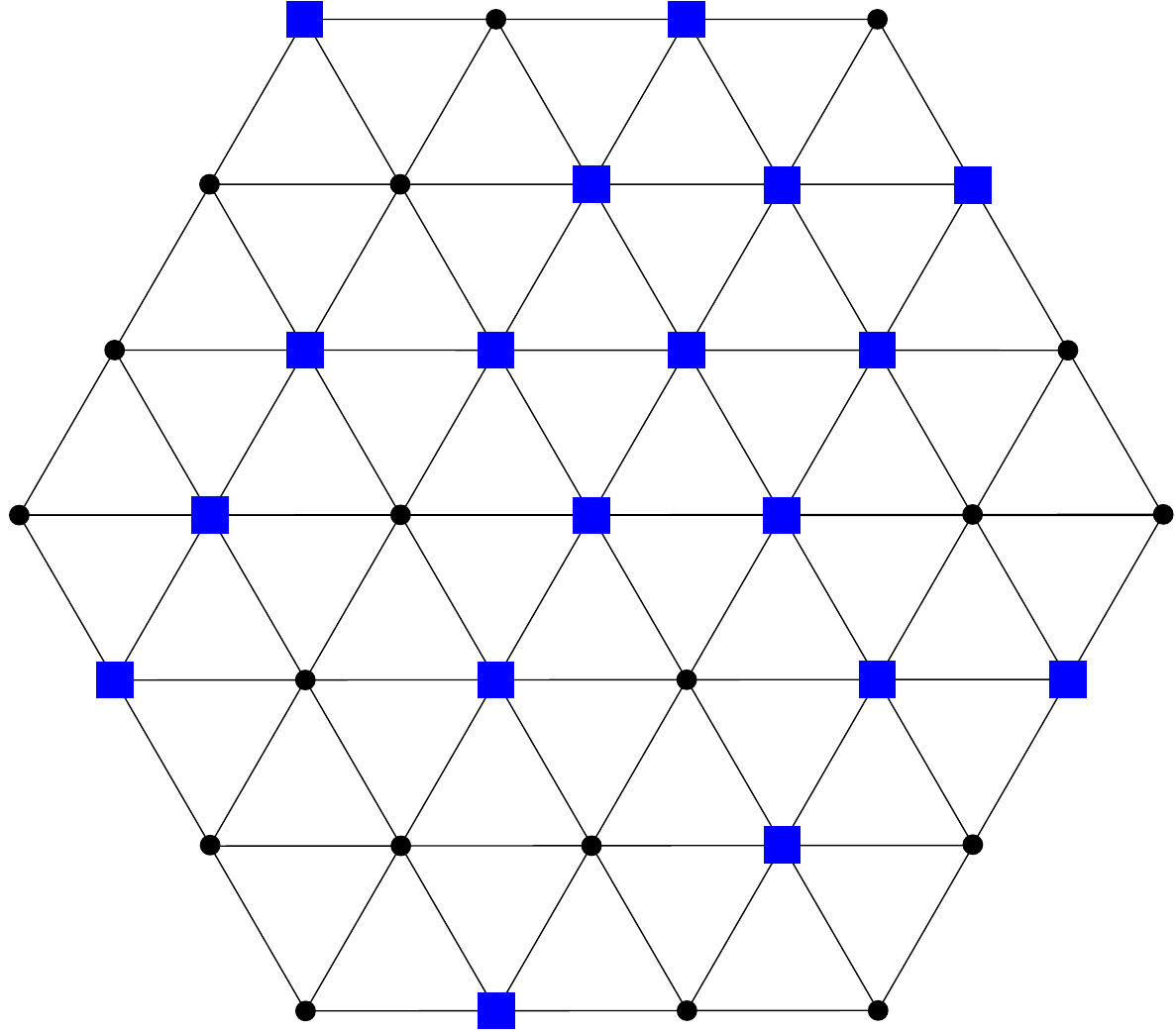}
\includegraphics[scale=0.34, page=2]{shift_power_dom.pdf}
\caption{(Left) Blue-square vertices are in the set $A$. (Right) Blue-square vertices are in the $1$-shifted set $A'$ of $A$: the left-most vertices of each line $l_{v_1 =i}$ are in $A'$.}
\label{fig:shift}
\end{figure}

\begin{lemma} \label{lem:border} Let $A'$ be the $j$-shifted set of $A$. Then $|\B_{A'}| \leq |\B_A|$. \end{lemma}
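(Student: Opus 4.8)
The plan is to fix the direction of shifting---say $j=1$, so we are pushing every horizontal line $l_{v_1=i}$ as far to the left as the hexagonal boundary permits---and to exhibit an injection from $\B_{A'}$ into $\B_A$, or equivalently to argue line by line that the shift can only lose border vertices, never create them. The key structural fact is that membership of a vertex $v\in A'$ in $\B_{A'}$ is a purely local condition: $v$ is a border vertex iff at least one of its six (or fewer, near the hexagon's edge) neighbors is missing from $A'$. Since the shift preserves $|A_{v_1=i}|$ on every line, the only way $A'$ can differ from $A$ is in the \emph{horizontal position} of the occupied segment on each line; within a line the occupied vertices of $A'$ form a contiguous block flush against the left boundary.

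First I would record the elementary observation that on a single line $l_{v_1=i}$, if $A'_{v_1=i}$ is nonempty then it is an interval $\{v_2=\alpha,\dots,\alpha+m-1\}$ with $m=|A_{v_1=i}|$; hence an $A'$-vertex on that line has \emph{both} of its same-line neighbors present except at the two ends of the block, and at the left end the missing same-line neighbor lies outside the grid exactly when $\alpha$ is the minimum feasible $v_2$-coordinate, which by the choice of $\alpha$ it always is. So within-line, $A'$ has at most one same-line border vertex per line (the right end of the block), whereas $A$ could have two. Next I would handle the cross-line neighbors: a vertex of $A'$ on line $i$ has up to four neighbors on the adjacent lines $l_{v_1=i-1}$ and $l_{v_1=i+1}$ (two on each, with consecutive $v_2$-coordinates), and these adjacent lines also carry left-flushed contiguous blocks in $A'$. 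I would then compare, line-pair by line-pair, the number of $A$-vertices on line $i$ that see a gap in line $i\pm 1$ against the corresponding count for $A'$: because both blocks are flushed to the same left boundary, the block on line $i$ of $A'$ is ``covered from the left'' by the block on line $i\pm1$ of $A'$ at least as well as in $A$, so the number of line-$i$ vertices with a missing up/down neighbor does not increase. Summing these per-line and per-line-pair inequalities over all $i$ gives $|\B_{A'}|\le|\B_A|$.

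The main obstacle, and the step that needs genuine care rather than hand-waving, is the cross-line comparison near the ``waist'' of the hexagon, where the parameter $\alpha$ changes behaviour: for lines with $0\le i\le k-1$ the left boundary of line $i$ sits at $v_2=0$, but for $k\le i\le 2k-2$ it sits at $v_2=i-(k-1)$, so two vertically adjacent lines can have left boundaries that are offset by one. In that regime ``flushing both blocks left'' does not literally align their left endpoints, and I must check that the up/down adjacency pattern (which neighbor coordinates shift by $(\pm1,0,\mp1)$ versus $(0,\pm1,\pm1)$, per the neighbor list given after Theorem~\ref{th:grid}) still makes the $A'$ configuration at least as ``gap-free from the left'' as the $A$ configuration. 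The cleanest way to discharge this is to set up the injection $\B_{A'}\hookrightarrow\B_A$ explicitly: map each border vertex $v=(i,v_2,v_3)$ of $A'$ to the $(v_2-\alpha_i)$-th vertex from the left of $A_{v_1=i}$ in $A$ (i.e.\ track ``rank within the line''), and verify this lands in $\B_A$ by exhibiting, for each reason $v\in\B_{A'}$ (missing left, missing right, missing up-neighbor, missing down-neighbor), a correspondingly missing neighbor of the image in $A$; the boundary bookkeeping for $\alpha$ is then localized to checking that rank is preserved by this map, which it is by construction since $|A'_{v_1=i}|=|A_{v_1=i}|$. I expect verifying the four neighbor cases, especially the up/down ones across the waist, to be the most delicate part, but each is a short case check once the rank-preserving injection is in place.
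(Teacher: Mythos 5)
Your high-level strategy (fix $j$, argue line by line, exploit the fact that each line of $A'$ is a contiguous left-flushed block) is the same as the paper's, but both of the concrete mechanisms you propose for closing the argument have genuine flaws. First, the counting route: you bound, separately for each ``reason'' (missing same-line neighbour, missing neighbour in $l_{i+1}$, missing neighbour in $l_{i-1}$), the number of line-$i$ vertices exhibiting that reason, and then you \emph{sum} these per-reason inequalities to conclude $|\B_{A'}|\le|\B_A|$. That last step is invalid: $\B_A\cap l_i$ is the \emph{union} of the three reason-sets, and a vertex of $A$ missing several neighbours is counted once per reason, so the sum of the per-reason counts can strictly exceed $|\B_A\cap l_i|$. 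Per-reason inequalities therefore do not transfer to the union. (The approach is salvageable, but only with an extra observation you do not make: for the left-flushed set $A'$ each reason-set on a line is a suffix of the block, so the three sets are nested and the union has size equal to the \emph{maximum}, not the sum, of the per-reason counts; that maximum is dominated by $|\B_A\cap l_i|$ since a union is at least as large as each of its parts.)

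Second, the rank-preserving injection you call the ``cleanest way'' does not land in $\B_A$. Take a line $l_{v_1=i}$ with positions $v_2=0,\dots,L-1$, let $A\cap l_{v_1=i}$ consist of all positions except $1$, and let the two adjacent lines be entirely contained in $A$. In $A'$ the block is $\{0,\dots,L-2\}$ and its right end (rank $L-2$) lies in $\B_{A'}$ because position $L-1$ is unoccupied; but the rank-$(L-2)$ vertex of $A$ on that line is position $L-1$, whose only missing neighbour is off the grid, so it is \emph{not} in $\B_A$. So the ``short case check'' you defer would fail, and a different injection would be needed. The paper avoids both pitfalls by counting the complement: on each line it bounds the number $a_i-b_i$ of \emph{non-border} vertices of $A$, which is a conjunction of conditions (all neighbours present) rather than a disjunction, yielding $a_i-b_i\le\min\{a_{i+1}-1,a_{i-1},a_i\}$ (with the appropriate variants at and above the waist $i=k-1$), and then shows that contiguity forces equality for $A'$; this gives $b'_i\le b_i$ line by line with no double counting and no injection. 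I would recommend reworking your argument along those lines, or else supplying the nestedness observation above to repair the max-versus-sum issue.
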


\begin{proof}
In this proof, since $j$ is fixed, we simplify the notation $l_{v_j=i}$ into $l_i$.
Let $a_i$ be the number of vertices in $A$ (and in $A'$) in line $l_i$ and $b_i$ (resp. $b'_i$) be the number of vertices in $\B_A$ (resp. $\B_{A'}$) in line $l_i$. 
We show that $b_i \geq b'_i$ for every line $l_i$, $0 \leq i \leq 2k-2$.
We consider three cases depending on the value of $i$ (when $0 \leq i < k-1$, when $i=k-1$ and when $k \leq i \leq 2k-2$):

\begin{itemize}
\item $0 \leq i < k -1$: we thus have 
$|l_{i+1}| = |l_i| + 1$ and $|l_i| = |l_{i-1}| + 1$.
Let us consider vertices in line $l_i$ which are in $A$ but not in the border of $A$: there are $a_i-b_i$ such vertices. By definition, we have $a_i - b_i \leq a_i$. Their neighbors (if they exist) in $l_{i-1}$ and $l_{i+1}$ are in $A$. We have thus both $a_i-b_i \leq a_{i+1}-1$, and $a_i-b_i \leq a_{i-1}$. Hence $a_i - b_i \leq \min\{a_{i+1}-1, a_{i-1}, a_i\}$ for $1 \leq i < k-1$ (for $i=0$, we have $a_i - b_i \leq \min\{a_{i+1}-1, a_i\}$).
We can apply the same reasoning to the vertices that are in $A'$ but not in the border of $A'$: since the vertices of $A'$ are consecutive on lines $l_{i-1}$, $l_i$ and $l_{i+1}$, we get that $a_i - b'_i = \min\{a_{i+1}-1, a_{i-1}, a_i\}$ (for $i=0$, we have $a_i - b'_i = \min\{a_{i+1}-1, a_i\}$). Note that the inequalities we get for $A$ turn into equalities on $A'$.
Then $a_i - b_i \leq a_i - b'_i$, and thus $b_i \geq b'_i$.

\item We have a similar proof when $k-1 < i \leq 2k-2$, for which we have $|l_{i+1}| = |l_i|-1$ and $|l_i| = |l_{i-1}|-1$: in that case, we get $a_i - b'_i = \min\{a_{i-1}-1,a_{i+1},a_i\} \geq a_i - b_i$.

\item $i=k-1$: we thus have $|l_{i+1}| = |l_{i-1}| = |l_i| +1$.
As for the previous case, first consider vertices which are in $A$ 
but not in the border of $A$: by definition $a_i - b_i \leq a_i$, 
and we have $a_{i+1} \geq a_i - b_i$ and $a_{i-1} \geq a_i - b_i$. 
Thus $a_i - b_i \leq \min\{a_{i+1}, a_{i-1}, a_i\}$.
Similarly, we get that $a_i - b'_i = \min\{a_{i+1}, a_{i-1}, a_i\}$.
Thus $a_i - b_i \leq a_i - b'_i$, and so $b_i \geq b'_i$. 
\end{itemize}
\end{proof}

We define the \emph{shifting process} of a set $A \subset V(T_k)$ as the following iterative process: $A_{\ell+1}=((A_\ell^{(1)})^{(2)})^{(3)}$, with $A_0=A$. In other words, we successively apply 1-shift, 2-shift and 3-shift to the set $A$ until a fixed point $A_{\ell^*}$ is reached.
We show that this fixed point exists and that the vertices of the resulting set form a particular shape:

\begin{lemma} 
\begin{enumerate}[label=\upshape(\roman*),ref=\thelemma(\roman*)]
\item\label{lem:stops} This shifting process stops, i.e. there exists $\ell^*$ such that $A_{\ell^*+1} = A_{\ell^*}$.
\item\label{lem:staircase} Let $A^* = A_{\ell^*}$. If $v= (x,y,z) \in A^*$, then all vertices $v'= (x',y',z')$ with $y' \leq y$ and $z' \leq z$ are also in $A^*$ (see Figure~\ref{fig:staircase1}). 
\end{enumerate}

\end{lemma}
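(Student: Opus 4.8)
The plan is to treat the two parts separately, with part (i) providing the finiteness needed to make sense of the fixed point $A^*$ in part (ii).

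\textbf{Part (i): termination.} The key observation is that each of the three shift operations is \emph{monotone} in a suitable potential function and never increases it, and that a full round $A_\ell \mapsto A_{\ell+1}$ strictly decreases the potential unless $A_\ell$ is already a fixed point. A natural candidate potential is the sum over all vertices $v \in A$ of, say, $v_2 + v_3$ (or more carefully, the lexicographically-ordered vector of these sums), since a $j$-shift pushes vertices toward smaller $v_{j+1}$ along each line $l_{v_j=i}$ and hence can only decrease each relevant coordinate sum. Concretely, I would argue: (a) a $j$-shift weakly decreases $\sum_{v\in A} v_{j+1}$, because on each line it replaces the multiset of $v_{j+1}$-values by the smallest possible consecutive block of the same size; (b) the total configuration lives in a finite set (all subsets of $V(T_k)$ of fixed cardinality $|A|$), so any strictly-monotone potential forces termination; (c) if a full round produces no change, we have reached $A_{\ell^*}$. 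Since $V(T_k)$ is finite, there are only finitely many sets of size $|A|$, so even the crude argument "the sequence $A_\ell$ is eventually periodic, and monotonicity of the potential forbids a nontrivial period" closes part (i). I expect this to be routine once the right potential is pinned down; the only mild subtlety is checking that the coordinate shift by $\alpha$ in the definition of the shifted set does not break monotonicity — but $\alpha$ depends only on the line index $i$, not on $A$, so it is harmless.

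\textbf{Part (ii): the staircase shape.} Here I would show that $A^*$ is a \emph{down-set} with respect to the coordinatewise order on the pair $(y,z)$: if $(x,y,z)\in A^*$ and $y'\le y$, $z'\le z$ (with $(x',y',z')$ a valid vertex of $T_k$, so $x' = y' - z' + k - 1$ is forced), then $(x',y',z')\in A^*$. The strategy is to characterize fixed points of the shifting process directly. Being a fixed point of the 1-shift means that on every horizontal line $l_{v_1=i}$ the set $A^*$ occupies a bottom-justified consecutive block; being a fixed point of the 2-shift and 3-shift means the analogous statement along the other two line directions. I would then show that a set which is simultaneously a consecutive bottom-justified block on all three families of lines must be a $(y,z)$-down-set: pick $v=(x,y,z)\in A^*$ and a candidate $v'=(x',y',z')$ with $y'\le y$, $z'\le z$; reach $v'$ from $v$ by moving first along a line of one direction (decreasing one coordinate while keeping another fixed, staying inside $A^*$ by the consecutive-block property) and then along a line of another direction, using at each step that $A^*$ is a fixed point so the block reached by the shift is exactly $A^*_{v_j=i}$. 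The interaction of the three directions — ensuring the intermediate vertex along the path is itself in $A^*$ — is where the consecutive-block characterization on \emph{all three} line families is essential; using only one or two directions would not suffice.

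\textbf{Main obstacle.} The technically delicate point is part (ii): translating "fixed point of all three shifts" into the clean geometric down-set statement. One must be careful that the three shift directions are cyclically coupled ($j \mapsto j+1$), so being a bottom-justified block in direction $1$ controls the $v_2$-extent, in direction $2$ the $v_3$-extent, and in direction $3$ the $v_1$-extent; reconciling these to conclude monotonicity in the single pair $(y,z)$ requires a short case analysis on whether the path from $v$ to $v'$ stays within one "triangular half" of the hexagon (i.e. whether line lengths are increasing or decreasing, the $i<k-1$ versus $i\ge k-1$ dichotomy from Lemma~\ref{lem:border}). I would handle this by first reducing to unit steps — showing that if $(x,y,z)\in A^*$ then each of the coordinatewise-immediate predecessors $(x-1,y-1,z)$ and $(x,y-1,z-1)$ and $(x-1,y,z-1)$ that are valid vertices also lie in $A^*$ — and then iterating, which keeps every intermediate vertex in $A^*$ by construction.
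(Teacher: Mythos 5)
Your proposal is correct and follows essentially the same route as the paper: part (i) via a potential that every nontrivial $j$-shift strictly decreases (the paper uses $w(v)=v_1+2v_2+2v_3$; your candidate $\sum_{v\in A}(v_2+v_3)$ works just as well, and in both cases the ``routine'' verification you defer is exactly the check, using $x-y+z=k-1$, that decreasing $v_{j+1}$ along a line $l_{v_j=i}$ decreases the potential for each of the three values of $j$), and part (ii) by observing that the fixed point is a leftmost-justified block on all three line families and reducing to the three immediate predecessors of $v$. One small slip: your third predecessor $(x-1,y,z-1)$ is not a vertex of $T_k$ (it violates $x-y+z=k-1$); the intended one is $(x+1,y,z-1)$, which lies on $l_{v_2=y}$ with smaller $v_3$, and with that correction your unit-step argument is exactly the paper's.
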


\begin{figure}[ht]
\centering
\includegraphics[scale=0.6]{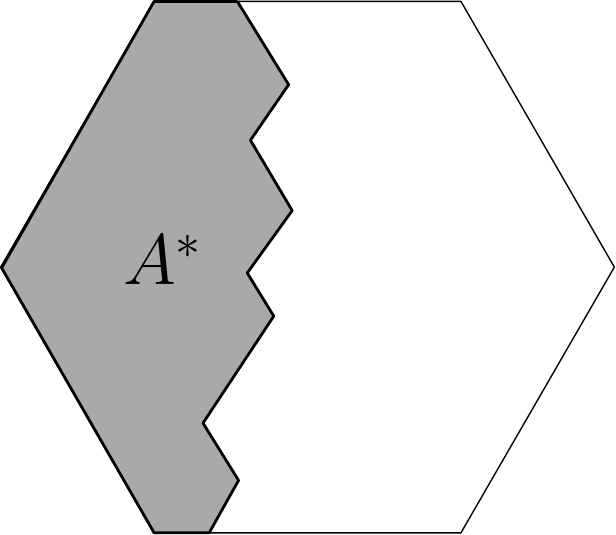}
\caption{The set $A^*$ has a staircase shape.}
\label{fig:staircase1}
\end{figure}

\begin{proof}

(i) We define the \emph{weight} in $A$ of a vertex as follows: $w_A(v) = v_1 + 2v_2 + 2v_3$ if $v \in A$, $w_A(v) = 0$ otherwise.
 Similarly, the weight of a set $S$ relatively to $A$ is $w_A(S) = \sum_{v \in S} w_A(v)$. For simplicity, we denote by $w_A$ the \emph{global weight} of the set $A$: $w_A = w_A(T_k)$.
 
Let $A'$ be the $j$-shifted set of $A$. We show that if $A' \neq A$, then $w_{A'} < w_A$.

Recall that for every vertex $v=(v_1,v_2,v_3)$ of $T_k$, $v_1 -v_2 + v_3 =k-1$.
We first show that if $v$ and $v'$ are two vertices with $v_j(v') = v_j(v)$ and $v_{j+1}(v') < v_{j+1}(v)$, then $w(v') < w(v)$:
\begin{itemize}
\item $j=1$: $v_1(v') = v_1(v)$ and $v_2(v') < v_2(v)$, so $v_3(v') = k-1 -v_1(v') + v_2(v') = k-1 - v_1(v) + v_2(v') < v_3(v)$. Thus $w(v') < w(v)$.
\item $j=2$: $v_2(v') = v_2(v)$ and $v_3(v') < v_3(v)$. 
Since $v_1(v)-v_2(v)+v_3(v) = v_1(v')-v_2(v')+v_3(v')$, we get $v_1(v) + v_3(v) = v_1(v')+v_3(v')$. Thus $w(v) -w(v') = v_1(v) + 2v_2(v) + 2v_3(v) - v_1(v') - 2v_2(v') - 2v_3(v') = v_3(v) - v_3(v')$. So $w(v') < w(v)$.
\item $j=3$: $v_3(v') = v_3(v)$ and $v_1(v') < v_1(v)$, so $v_2(v') = v_1(v') + v_3(v') -k+1 = v_1(v') + v_3(v) -k+1 \leq v_2(v)$. Thus $w(v') < w(v)$.
\end{itemize}
By definition on a $j$-shifted set, 
for each line $l_{v_j=i}$, 
\[w_{A'}(l_{v_j=i}) - w_A(l_{v_j=i}) = \sum_{v' \in A' \setminus A} w(v') - \sum_{v \in A \setminus A'} w(v)  \,,\]
and either $A_{v_j=i} = A'_{v_j=i}$, and this sums to 0, or $A_{v_j=i} \neq A'_{v_j=i}$, and it is strictly negative. Therefore  $A' \neq A$ implies $w_{A'} < w_A$.
Since the global weight of any set is always positive, this directly concludes the proof of item (i).

(ii) Let $v = (v_1,v_2,v_3)$ be a vertex in $A^*$. The vertices $u_1=(v_1+1,v_2,v_3-1)$, $u_2=(v_1,v_2-1,v_3-1)$ and $u_3=(v_1-1,v_2-1,v_3)$ (i.e. the north-west, west and south-west neighbors of $v$) are also in $A^*$: otherwise, we could again shift the set $A^*$ and get the set $A^*-\{v\}+\{u_i\}$, which has less weight than $A^*$, a contradiction.
Since this is true for every vertex of $A^*$, the proposition holds.
\end{proof}

We can now prove the lower bound:

\begin{lemma}\label{lem:lower}
For $k \in \mathbb{N}^*$, 
$\gamma_P(T_k) \geq \dfrac{2k-1}{6}$.
\end{lemma}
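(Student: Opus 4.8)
The plan is to argue that any power dominating set $S$ must be large enough that, after one step of adjacency-based monitoring and a bounded amount of propagation, the monitored region cannot yet be all of $T_k$ unless $|S|$ is at least roughly $(2k-1)/6$. The key tool is the shifting machinery just developed: starting from $M_1 = N[S]$ (the set monitored after the first, non-propagation step), apply the shifting process to obtain a set $A^*$ with $|A^*| = |M_1| \le 7|S|$ (since each vertex has at most $6$ neighbors) and with $|\B_{A^*}| \le |\B_{M_1}|$ by Lemma~\ref{lem:border} and Lemma~\ref{lem:stops}. By Lemma~\ref{lem:staircase}, $A^*$ has a downward-closed ``staircase'' shape in the $(y,z)$-coordinates, which lets us read off its border size as a clean function of its staircase profile.

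The heart of the argument is a monitoring/propagation invariant: during the propagation phase, each propagation step adds one vertex to $M$ but does so through a border vertex that had exactly one unmonitored neighbor; I would show that the relevant ``frontier size'' — measured via $|\B_M|$ after pushing $M$ into staircase form — does not increase under propagation, or increases in a controlled way, so that if $S$ power dominates $T_k$ then the staircase set $A^*$ obtained from $N[S]$ must already have border at least as large as the border of a set that is ``one propagation chain away'' from covering the whole grid. Concretely, I expect to show that a staircase set whose propagation closure is all of $T_k$ must itself satisfy $|\B_{A^*}| \ge$ (roughly) the side length $k$ of the hexagon, because propagation along the border advances the staircase front by at most one vertex per monitored border vertex per round, and the total ``width'' that must be swept is $\Theta(k)$. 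Combining $|\B_{A^*}| \ge c\,k$ with the bound $|\B_{A^*}| \le |\B_{N[S]}| \le $ (some linear function of $|S|$, using that the closed neighborhood of a single vertex contributes a border of size at most $6$) yields $|S| \ge (2k-1)/6$ after optimizing the constants; the precise constant $2k-1$ over $6$ should fall out of counting border vertices of the neighborhood of $|S|$ vertices arranged to maximize coverage, i.e. $|\B_{N[S]}| \le 6|S| - (\text{savings from shared borders})$, compared against the minimum staircase border $2k-1$ needed for the propagation closure to reach every corner of $T_k$.

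I would carry out the steps in this order: (1) fix a power dominating set $S$ and let $A = N[S]$; (2) run the shifting process to get $A^*$, recording $|A^*| = |A|$ and $|\B_{A^*}| \le |\B_A|$; (3) bound $|\B_A|$ above in terms of $|S|$ by a direct neighborhood count; (4) prove the propagation invariant showing that if the propagation closure of a staircase set is $V(T_k)$ then the staircase set has border (and/or size) bounded below by a linear function of $k$; (5) chain the inequalities and simplify to obtain $\gamma_P(T_k) \ge (2k-1)/6$.

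The main obstacle will be step~(4): propagation is a global, sequential process that does not obviously interact well with the static staircase shape, so I must either show that the staircase property is preserved (or approximately preserved) under propagation steps, or instead directly analyze how a staircase region can grow under the one-unmonitored-neighbor rule and prove that covering the two ``far'' corners of the hexagon from a small-border staircase is impossible. Getting the constants exactly right — so that the bound is $(2k-1)/6$ rather than something weaker — will require careful bookkeeping of how many border vertices of $T_k$ itself must be present in $A^*$ for the border-assisted propagation described after Figure~\ref{fig:prop_hexa} to even start.
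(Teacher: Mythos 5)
Your proposal assembles the right ingredients --- the shifting process, the staircase shape of the fixed point, the bound $|\B_{N[S]}| \leq 6|S|$, and the observation that a propagation step never increases the border (the propagating vertex leaves the border and at most one new vertex enters it) --- but it applies them to the wrong set, and the step you yourself flag as the main obstacle (your step (4)) is a genuine gap that cannot be closed in the form you state it. The shifting process rearranges vertices arbitrarily within lines, so the propagation closure of $A^{*} = (N[S])^{*}$ bears no relation to the propagation closure of $N[S]$; the hypothesis ``the propagation closure of the staircase set is $V(T_k)$'' is therefore simply not available to you. Worse, the conclusion you want, $|\B_{A^*}| \geq 2k-1$ (or even $\geq ck$), does not hold for $A^* = (N[S])^*$ in general: since $|N[S]| \leq 7|S| = O(k)$, the staircase shape alone permits $A^*$ to collapse into a corner region meeting only $O(\sqrt{k})$ of the $2k-1$ lines $l_{v_1=i}$, hence with border only $O(\sqrt{k})$. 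No bookkeeping of constants recovers a linear lower bound from the border of the shifted \emph{initial} monitored set.

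The missing idea is to stop the propagation in the middle rather than reasoning from $N[S]$. Order the propagated vertices one at a time, $M[0] = N[S]$ and $M[t+1] = M[t] \cup \{m_{t+1}\}$; your border-monotonicity invariant gives $|\B_{M[t]}| \leq |\B_{M[0]}| \leq 6|S|$ for every $t$ (note that this invariant concerns the \emph{actual} monitored sets, with no shifting involved, which is why it goes through). Now pick the intermediate set $M = M[t]$ with $|M| = |V(T_k)|/2$ and shift \emph{that} set. Because $M^*$ is a staircase containing exactly half the vertices, no line $l_{v_1=i}$ can lie entirely inside $M^*$ (applying Lemma~\ref{lem:staircase} to its rightmost vertex would force strictly more than half of $V(T_k)$ into $M^*$) and no line can be disjoint from $M^*$ (the contrapositive forces strictly fewer than half); hence every one of the $2k-1$ lines contributes at least one border vertex, giving $2k-1 \leq |\B_{M^*}| \leq |\B_M| \leq 6|S|$. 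This is where the constant $2k-1$ actually comes from: it counts the lines cut by a half-full staircase, not border vertices of $N[S]$.
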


\begin{proof}
Let $S$ be a power dominating set of $T_k$. If $|S| > \frac{k}{3}$, then the result holds. Thus we assume $|S| \leq \left\lceil \frac{k}{3} \right\rceil$.
In power domination, propagation from a set $S$ is done by rounds. We decide of an arbitrary order on the vertices monitored by $S$ during each round. This defines a (non-unique) total order $m_1,\ldots, m_{|V(G) \setminus N[S]|}$ on the vertices of $V(G) \setminus N[S]$. We then define the set $M[t]$ as follows: $M[0]=N[S]$, and $M[t+1] = M[t] \cup \{m_{t+1}\}$.

The key idea of this proof is to consider the size of the sets $\B_{M[t]}$, to bound it and to deduce a bound on $|S|$. It is a classical way to prove lower bounds for power domination in regular lattices (see for example the lower bound proof on strong products~\cite{dorbec_08}). However, on the contrary to what happens in other cases, the size of the sets $\B_{M[t]}$ is not globally bounded from below: at the end of the propagation, no vertices belong to the border of the monitored set. We thus ``stop'' the propagation in the middle of the process and reason from there.

\medskip
\textbf{Claim 1.} For any $0 \leq i \leq |V(G) \setminus N[S]|$, we have $|\B_{M[i]}| \leq 6|S|$.

{\it Proof.}
We prove it by induction on $i$: $|\B_{M[0]}| = |\B_{N[S]}| \leq 6|S|$ by definition. If the vertex $m_{i+1}$ becomes monitored by propagation from a vertex $v$ in $\B_{M[i]}$, then $v$ is not in $\B_{M[i+1]}$, and at most one vertex ($m_{i+1}$) is added to $\B_{M[i+1]}$. Thus $|\B_{M[i+1]}| \leq |\B_{M[i]}|$. Using the induction hypothesis, we conclude that $|\B_{M[i+1]}| \leq 6|S|$.
{\footnotesize ($\square$) }

\medskip
Let $M$ be the set $M[t]$ containing $\frac{|V(T_k|}{2}$ vertices 
(as soon as $k \geq 3$, we get $\frac{|V(T_k|}{2} = \frac{3k^2 - 3k +1}{2} \geq \frac{7(k+1)}{3} \geq 7|S| \geq |M[0]|$, and so $M$ exists), and let $M^*$ be the set defined from $M$ by Lemma~\ref{lem:stops}. 

\smallskip
\textbf{Claim 2.} We have $2k-1 \leq |\B_{M^*}|$.

{\it Proof.}
We now prove that for every index $0 \leq i \leq 2k-2$, the line $l_{v_1=i}$ contains at least one vertex of $\B_{M^*}$.

Suppose there exists an index $0 \leq i \leq 2k-2$ such that all vertices of the line $l_{v_1=i}$ are in $M^*$. 
If $0\leq i \leq k-1$, then the vertex $w = (i,k+i-1,2k-2)$ (i.e. the right-most vertex of the line $l_{v_1=i}$) is in $M^*$, and so by Lemma~\ref{lem:staircase}, all vertices of the set $\{(v_1,v_2,v_3) \mid v_2 \leq k+i-1\}$ are also in $M^*$ (see Figure~\ref{fig:staircase}a). Since $k+i-1 > k-1$, then strictly more than half of the vertices of $T_k$ are in $M^*$, and so $M^*$ has strictly more than the required number of vertices, a contradiction.
Similarly, if $k-1 < i \leq 2k-2$: the vertex $w = (i,2k-2,3k-3-i)$ (i.e. the right-most vertex of the line $l_{v_1=i}$) is in $M^*$, and thus by Lemma~\ref{lem:staircase}, all vertices of the set $\{(v_1,v_2,v_3) \mid v_3 \leq 3k-3-i \}$ are also in $M^*$. Since $3k-3-i > k-1$, then strictly more than half of the vertices of $T_k$ are in $M^*$, a contradiction.
Thus every line $l_{v_1=i}$ contains at least one vertex not in $M^*$. 

Suppose now that one of the lines $l_{v_1=i}$ contains no vertex of $M^*$.
If $0 \leq i \leq k-1$ (see Figure~\ref{fig:staircase}b), then the vertex $w=(i,0,k-1-i)$ (i.e. the left-most vertex of the line $l_{v_1=i}$) is not in $M^*$. By the contrapositive of Lemma~\ref{lem:staircase}, the line $l_{v_3=k-1-i}$ also contains no vertices of $M^*$, and so all vertices of $M^*$ are included in the set $\{(v_1,v_2,v_3) \mid v_3 < k-1-i\}$ (they are all on the left and above line $l_{v_3=k-1-i}$). Thus $M^*$ contains strictly less than the half of the vertices of $T_k$, a contradiction.
Similarly, if $k-1 < i \leq 2k-2$, then the vertex $w = (i,i-k+1,0)$ is not in $M^*$. By the contrapositive of Lemma~\ref{lem:staircase}, the line $l_{v_2=i-k+1}$ also contains no vertices of $M^*$, and so all vertices of $M^*$ are included in the set $\{(v_1,v_2,v_3) \mid v_2 < i-k+1\}$ (they are all on the left and below line $l_{v_2=i-k+1}$). Since in that case $i-k+1 < k-1$, then again, $|M^*| = |M| < \frac{|V(T_k)|}{2}$ vertices, a contradiction.

\begin{figure}[ht]
\centering
\includegraphics[scale=0.6]{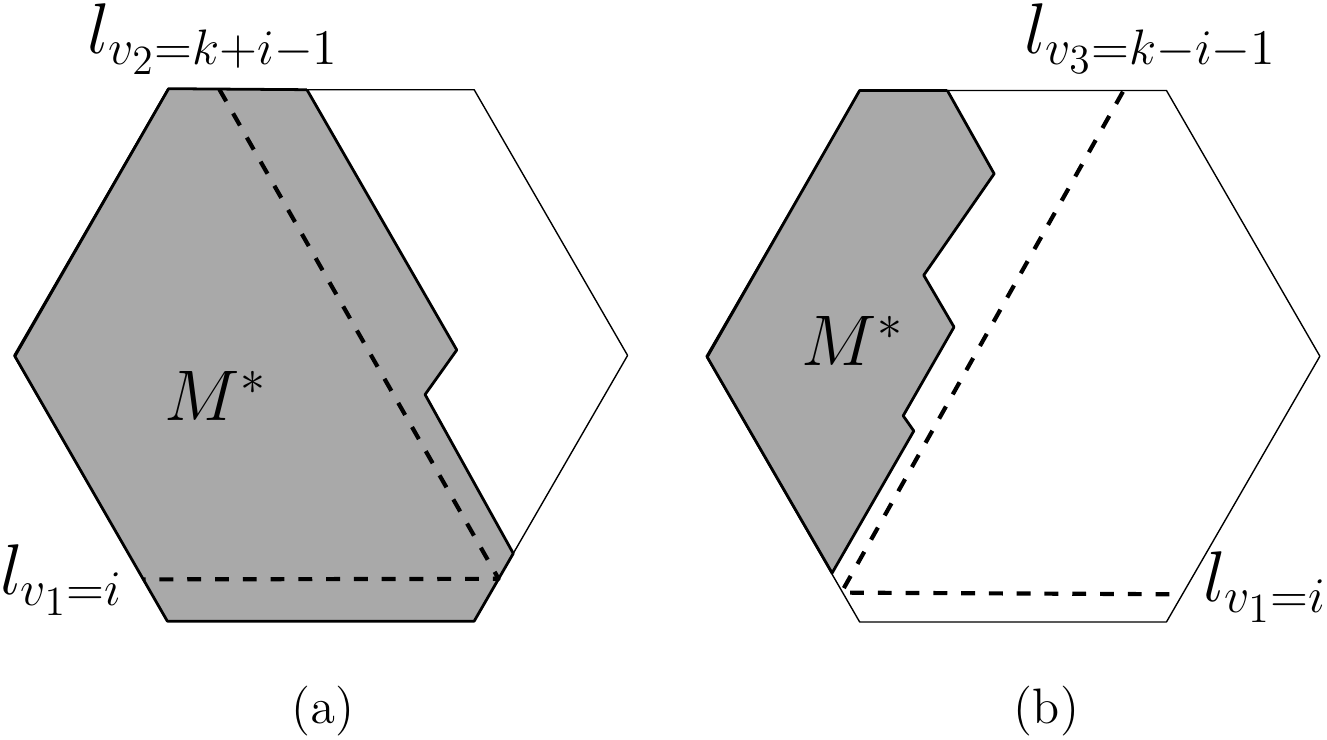}
\caption{(a) If all vertices of a line $l_{v_1=i}$ are in $M^*$ ($0 \leq i \leq k-1$), then all vertices of $T_k$ with $v_2 \leq k+i-1$ are also in $M^*$. (b) If the line $l_{v_1=i}$ contains no vertices of $M^*$ ($1 \leq i \leq k-1$), then all vertices of $M$ are above and left of line $l_{v_3=k-i-1}$.}
\label{fig:staircase}
\end{figure}

We thus get that each line $l_{v_1=i}$ contains at least one vertex of $M^*$ and not all its vertices are in $M^*$. Thus each line contains at least one vertex of $\B_{M^*}$, and so $2k-1 \leq |\B_{M^*}|$. {\footnotesize ($\square$) }

\smallskip
By Lemma~\ref{lem:border}, $|\B_{M^*}| \leq |\B_M|$, hence $2k-1 \leq |\B_M|$. Using Claim~1, we get $2k-1 \leq |\B_M| \leq 6|S|$, and so $|S| \geq \dfrac{2k-1}{6}$, which concludes the proof. \end{proof}

\medskip
We know that $\gamma_P(T_k)$ is an integer. Since there is no integer between $\frac{2k-1}{6} = \frac{k}{3} - \frac{1}{6}$ and $\left\lceil \frac{k}{3} \right\rceil$, then Lemma~\ref{lem:lower} directly implies that $\left\lceil \frac{k}{3} \right\rceil \leq \gamma_P(T_k)$.

This then gives our global result: 
\[\gamma_P(T_k) = \left\lceil \frac{k}{3} \right\rceil \quad \,,\]
concluding the proof of Theorem~\ref{th:grid}.

\section{Discussion}  
We carried on with the study of power domination in regular lattices, and examined the value of $\gamma_P(G)$ when $G$ is a triangular grid with hexagonal-shaped border. We showed that in that case, $\gamma_P(G) = \left\lceil \frac{k}{3} \right\rceil$.

The process of propagation in power domination led to the development of the concept of propagation radius, i.e. the number of propagation steps necessary in order to monitor the whole graph~\cite{dorbec_14}. It would be interesting to study the propagation radius of our constructions (in particular in the case of triangular grids) and to try and find a power dominating set minimizing this radius.

It seems that the border plays an important role in the propagation when the grid has an hexagonal shape, and so the next step in the understanding of power domination in triangular grids would be to look into grids with non-hexagonal shape. For example, what is the power domination number of a triangular grid with triangular border? 

Finally, the relation of our results with the ones presented for hexagonal grids by Ferrero et al.~\cite{ferrero_11} has to be noted: they show (with techniques different from the ones used in this paper) that $\gamma_P(H_n) =  \left\lceil \frac{2n}{3} \right\rceil$, where $n$ is the dimension of the hexagonal grid $H_n$, and so $\gamma_P(H_n) = \gamma_P(T_{2n})$.
Moreover, it is interesting to remark that $H_n$ is an induced subgraph of $T_{2n}$.
We already know~\cite{dorbec_16} that in general, the power domination number of an induced subgraph can be either smaller or arbitrarily large compared to the power domination number of the whole graph.
It would then be very interesting to investigate further under which conditions induced subgraphs have the same power dominating number as the whole graph.


\small
\bibliographystyle{abbrv}

\end{document}